\newtheorem{theorem}{Theorem}[section]
\newenvironment{proof}[1][Proof]{\textbf{#1.} }{\ \rule{0.5em}{0.5em}}
\newcommand{\eps}{\upvarepsilon}
\begin{document}

\title{Lower bounds on the performance of online algorithms for relaxed packing problems}

\author{J\'anos Balogh\thanks{Institute of Informatics,
		University of Szeged, Szeged, Hungary. \texttt{baloghj@inf.u-szeged.hu}.}      \and
Gy\"{o}rgy D\'{o}sa\thanks{Department of
Mathematics, University of Pannonia, Veszpr\'em, Hungary.
\texttt{dosagy@almos.vein.hu}.}
\and
Leah Epstein\thanks{
Department of Mathematics, University of Haifa, Haifa, Israel.
\texttt{lea@math.haifa.ac.il}.} \and   {\L}ukasz
Je{\.z}\thanks{Institute of Computer Science, University of
Wroc{\l}aw, Wroc{\l}aw,
Poland. \texttt{lje@cs.uni.wroc.pl}. {{\L}. Je{\.z} was supported by the Polish National Science Center grant 2020/39/B/ST6/01679.}}}

\title{Lower bounds on the performance of online algorithms \\ for relaxed packing problems}

\date{}


\maketitle              

\begin{abstract}
We prove new lower bounds for suitable competitive ratio measures of two relaxed online packing problems: online removable multiple knapsack, and a recently introduced online minimum peak appointment scheduling problem.  The high level objective in both problems is to pack arriving items of sizes at most $1$ into bins of capacity $1$ as efficiently as possible,  but the exact formalizations differ.  In the appointment scheduling problem, every item has to be assigned to a position, which can be seen as a time interval during a workday of length $1$. That is, items are not assigned to bins, but only once all the items are processed, the optimal number of bins subject to chosen positions is determined, and this is the cost of the online algorithm.  On the other hand, in the removable knapsack problem there is a fixed number of bins, and the goal of packing items, which consists in choosing a particular bin for every packed item (and nothing else), is to pack as valuable a subset as possible. In this last problem it is possible to reject items, that is, deliberately not pack them, as well as to remove packed items at any later point in time, which adds flexibility to the problem.
\end{abstract}
\noindent{\bf keywords: Bin packing, online algorithms, competitive ratio}


\section{Introduction}

We study two online problems, for which the offline version is a classic problem, with well-known efficient near-optimal solutions \cite{IK75,FerLue81,KK82}. Our online problems are not the most natural variants that one can define, but they are more relaxed. This models reality in the sense that often there is some flexibility even in online environments. Flexibility obviously allows the design of better online algorithms, though such algorithms typically cannot find optimal solutions. Here, we focus on the limitations of online algorithms for relaxed models.

We use the competitive ratio and asymptotic competitive ratio measures for analysis of online algorithms. The competitive ratio for minimization problems is the worst-case ratio between the cost of an online algorithm and the cost of an optimal offline algorithm for the same input. For maximization problems, the roles of the algorithm and an optimal offline solution are reversed. The asymptotic competitive ratio is the supreme limit of the competitive ratio for inputs with optimal costs or profits growing to infinity.

In offline bin packing \cite{JoDUGG74,FerLue81,KK82}, there are items of indices $1,2,\ldots,n$, where item $j$ has a rational size $s_j\in(0,1]$. The goal is to partition the items into the minimum number of sets called bins, where the total size for every bin does not exceed $1$. One can see this as a scheduling problem where items are assigned to machines that are available during the time interval $[0,1)$, but it is not necessary to assign the specific time slots in advance, since this can always be done. The length of the time slot for item $j$ is required to be of length $s_j$, where the interval has the form $[x,x+s_j)$ for $x\geq 0$ and $x\leq 1-s_j$. Alternatively, one can assign the time slots, and not the bins, where the assignment to bins can be done by a simple process of coloring an interval graph. In the standard online bin packing problem \cite{BBG,BBDEL_ESA18,BBDEL_newlb} items are to be assigned to bins sequentially, and it is assumed that items just receive consecutive time slots in the bin, starting from time zero. An alternative online model was defined recently by Escribe, Hu, and Levi \cite{EHL21}, where the online feature is the assignment to time slots rather than the bins. The problem is called minimum peak appointment scheduling (MPAS). In both models, items are presented one by one, such that each item is assigned irrevocably before the next item arrives.

In the work by Escribe, Hu, and Levi \cite{EHL21}, a randomized algorithm with asymptotic competitive ratio at most $1.5$ was designed, which was recently improved to~$\frac{16}{11}\approx 1.455$ by Smedira and Shmoys~\cite{SS21}. A lower bound of $1.5$ on the competitive ratio of deterministic algorithms was proved \cite{EHL21}, while Smedira and Shmoys~\cite{SS21} proved a lower bound of $1.2$ for the asymptotic competitive ratio of all randomized (and deterministic) algorithms.  These results contrast with those known for the standard online bin packing. While the best known lower bound of 1.54278 \cite{BBDEL_newlb} holds only for deterministic algorithms, earlier results hold for randomized algorithms, where the best such result is 1.5403 \cite{BBG}. The current best upper bound for standard online bin packing~\cite{BBDEL_ESA18} is 1.57829. A simple way to see the difference between the problems is the following. Consider a large even number of items of size $0.4$, possibly followed by the same number of items of size $0.6$. A bin packing algorithm has to decide how many pairs of items of size $0.4$ to create in order to have a good performance in both cases. However, in the case of MPAS, one can assign half of the items of size $0.4$ to the time interval $[0,0.4)$ and the other half to $[0.6,1)$. This is optimal if there are no further items, but if there are such items, half of them is assigned to $[0,0.6)$ and the other half to $[0.4,1)$, also obtaining an optimal solution. The problem is not meaningful as a separate offline problem, though generalizations were recently studied as offline problems \cite{RKS17,GGJK21,DJKRT21}.

A related problem is the so called \emph{dual bin packing}~\cite{BFLN01,ABFLNE02}, which may also be seen as a variant of the \emph{knapsack problem} with multiple knapsacks available~\cite{CJS16,IT02}: The processing of arriving items is similar, except here the number of available bins (or knapsacks) is fixed in advance, and the goal is rather to maximize the profit associated with those items that are successfully packed.
In the multiple knapsack as studied originally, the profit associated with a set of packed items was the \emph{maximum} over all the bins of the total value of all items packed in the bin~\cite{IT02}, but later studies~\cite{CJS16,BPP20} extended this to the \emph{sum} of values of all the items packed, which is in line with dual bin packing.  We are interested in this objective, so we will not specify the results concerned solely with the single bin of maximum value.  The packing consists in assigning an item to a particular bin, which is in contrast to MPAS, where instead a ``position'' or ``interval'' (on a horizontal axis) within a bin (which is yet to be determined) is specified.  An item may also be rejected by an algorithm, i.e., not packed at all.  Moreover, in the so called removable online variant of the knapsack problem it is allowed to remove a previously packed item (e.g., to accommodate the one arriving ), which from then on counts as rejected.  As is the case in vast amount of literature, we will consider two restricted settings, in which there is a particular natural relation between the profit associated with an item and its size (or processing time); note that as we focus on lower bounds, considering these makes our results stronger.  The two cases, which we call as in~\cite{CJS16}, are
\emph{proportional}, in which the value of an item equals its size, and \emph{unit}, in which every item is worth $1$ regardless of its size.

The dual bin packing problem corresponds to the unit case with no removals, for which no algorithm can attain constant competitive ratio~\cite{BFLN01}.
Thus the studies of this problem focused on ``accommodating'' instances, in which all items can be packed by the offline solution, for which
constant-competitive algorithms were designed.  Moreover, it is known that whether an algorithm is allowed to reject an item that it could pack in some bin (thus being ``unfair'') affects what ratio can be attained~\cite{ABFLNE02}.  In the later studies of the multiple knapsack problem, it was noted that proportional instances, even non-accommodating, allow constant-competitive ratio, which was eventually determined to be exactly $1+\ln 2\approx 1.6903$~\cite{CJS16,BPP20}.  Moreover, with removals allowed, a deterministic algorithm of asymptotic competitive ratio at most $3$ is known even for general instances, and the proportional and unit instances admit algorithms with much better competitive ratios of $1.6$ and $1.5$ respectively~\cite{CJS16}.  The corresponding lower bounds for these two settings, applicable even to randomized algorithms are only $\frac{8}{7}\approx 1.14$~\cite{CJS16} and $\frac{7}{6}\approx 1.17$~\cite{ABFLNE02} respectively.  No better lower bounds are known for general instances. For special cases with a small number of bins and the proportional case, lower bounds of $\frac 43$ and $\frac 65$ are known for the competitive ratio of deterministic algorithms with two and three bins, respectively \cite{CJS16}, and in some special cases the lower bound on the competitive ratio as a function of the number of bins is slightly inferior to the bounds of $\frac87$ and $\frac 76$ \cite{CJS16,ABFLNE02}.
This problem is also not of interest as a separate offline problem, though the knapsack problem and its variants are being studied continuously, and a near-optimal solution is known for almost fifty years \cite{IK75}.

\subsection{Our results}
In this work, we prove the following lower bounds on the performance of online algorithms:
\begin{itemize}
	\item A lower bound of $1.2287$  on the competitive ratio for deterministic algorithms for either the proportional or the unit case of the removable knapsack problem, improving upon the previous bounds of $\frac{8}{7}\approx 1.14$~\cite{CJS16} and $\frac{7}{6}\approx 1.17$~\cite{ABFLNE02} respectively,
	\item A lower bound of $1.2$ on the competitive ratio for randomized algorithms for the proportional case of the removable knapsack problem, also improving upon the previous bound of $\frac{8}{7}\approx 1.14$~\cite{CJS16},
	\item A lower bound of $1.2691$ on the asymptotic competitive ratio for randomized algorithms for the minimum peak appointment scheduling problem, improving upon the previous bound of $1.2$~\cite{SS21}.
\end{itemize}
We also consider some special cases for the online removable knapsack problem. For example, our results improve the known lower bound for three bins and deterministic algorithms.

\subsection{Adaptive item sizing in designing hard instances for online algorithms}
Those of our lower bounds that are designed specifically for deterministic algorithms employ the ``adaptive item sizing'' technique~\cite{BBDEL_newlb}, which we now describe.
This approach and more advanced approaches (in particular, one where there is a multiplicative gap between sizes) were used for other online bin packing problems \cite{FK13,BBDEL_ESA,balogh2017lower}.

This is a procedure in which a sequence $S$ of items arrives, all with sizes in a predetermined interval $[\alpha,\beta]$, where $\alpha$ and $\beta$ are parameters. This allows the items from $S$ to be partitioned into a set of \emph{smallish} items with sizes in the interval $[\alpha,\theta)$ and \emph{largish} items with sizes in the interval $(\theta,\beta]$. The threshold $\theta$ satisfies $\alpha<\theta<\beta$, and the classification of each item as either smallish or largish occurs immediately after its packing by the deterministic algorithm, but the value of $\theta$ is determined later.

Such classification and partitioning can be ensured by a procedure resembling the binary search. Let $a$ and $b$ be variables such that $\alpha \leq a < b \leq \beta$. All items thus far classified as smallish have sizes in $[\alpha,\ a ]$, and all items thus far classified as largish have sizes in $[ b,\ \beta ]$. Initially, we let $a=\alpha$ and $b=\beta$.  Then, the next item to arrive can have any size in $(a,b)$, e.g., $\frac{a+b}{2}$.
Once it is dealt with (packed or rejected) by the algorithm, and hence classified as either smallish or largish, the value of $a$ or $b$ respectively is set to this item's size.  Once all items in $S$ are processed, $\theta$ can take any value that could be the size of a next item in the sequence and thus separates the sizes of smallish and largish items, e.g., $\theta=\frac{a+b}{2}$ for the final values of the variables $a$ and $b$.

\section{Online removable knapsack}

In our lower bound proofs, we assume without loss of generality that the online algorithm is lazy in that it defers removing (or rejecting) items as long as the packing is valid.  Namely, we assume that upon the arrival of an item $e$,
the algorithm decides to either reject it outright without packing it or chooses any bin $B$ to which $e$ is packed.  If the bin $B$ then overflows, the algorithm chooses a subset of items from $B$, excluding $e$, for removal; this subset has to be minimal such that its removal makes the total size of the items remaining in bin $B$ at most $1$.  Additionally, the algorithm is only allowed to reject the item $e$ outright if there is no bin where it fits without removals; note that this does not mean no removals take place when such bin exists, as the algorithm can choose to pack $e$ in another bin.

\subsection{Deterministic online algorithms for the proportional and unit cases}

We start with a deterministic lower bound.

\begin{theorem}\label{122}
Any deterministic online algorithm that works for any number of bins $k \geq 2$, has competitive ratio of at least $1.228713$.
\end{theorem}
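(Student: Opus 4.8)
The plan is to produce, for each number of bins $k$, an adaptive adversarial instance that forces any deterministic algorithm into a suboptimal packing, and to show the forced value of $\mathrm{OPT}/\mathrm{ALG}$ is at least the claimed constant for every $k \ge 2$. Averaging arguments are cleanest when many bins are available, and the bound is tightest (smallest) in the large-$k$ regime, so the heart of the construction scales the instance with $k$ and classifies the algorithm's bins into a constant number of \emph{types} (empty, singly occupied by a small or a large item, doubly occupied, displaced, etc.). Both the algorithm's profit and the optimal profit then become linear expressions in the per-type bin counts, and the competitive ratio is the minimum, over all type-profiles the adversary can drive the algorithm into, of a ratio of two such linear forms; passing to $k\to\infty$ yields the clean limiting constant while finite $k$ gives a (larger) bound. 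Throughout I would invoke the adaptive item sizing procedure and the lazy-algorithm normalization from the preceding subsection, so that I never commit to exact sizes until the algorithm's structural decision on each item has been observed.

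Concretely, I would release the instance in phases. In the first phase a batch of $\Theta(k)$ items with sizes in an interval $[\alpha,\beta]$ straddling the eventual threshold $\theta$ is fed one at a time through the binary-search procedure; immediately after the algorithm disposes of each item it is labelled smallish or largish according to \emph{how} it was packed (placed in a fresh bin, combined with a resident item, or used to displace earlier items), updating $a$ or $b$ accordingly. The purpose of the labelling is that at the end I may set $\theta$ so that smallish items sit just below it and largish items just above, and the two regimes interact differently with a second, ``completion'' phase. In that second phase I release items whose sizes are chosen adaptively so as to exactly fill the residual space of bins in the configuration that favours the offline solution: the offline packing can pair first-phase with second-phase items to fill many bins to capacity, while the algorithm, having already committed its bins, has too little residual room to absorb the completion items profitably, and removability does not rescue it because discarding a committed item forfeits precisely the value that made the bin worth keeping.

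With the phase structure fixed, the analysis is a case distinction on the algorithm's end configuration, encoded by the per-type bin counts (and, in the unit case, by how many items of each size class survive). In each branch I would write the algorithm's profit -- total surviving size in the proportional case, number of surviving items in the unit case -- and exhibit an explicit near-perfect offline packing of the \emph{entire} released multiset, which was engineered via the classification to pack almost optimally into $k$ bins. Taking the ratio in each branch produces a function of $\alpha,\beta,\theta$ and the phase sizes, and the stated $1.228713$ is the value obtained by choosing these parameters so as to equalize the worst branches; the fact that the constant is so specific rather than a round number is exactly the signature of such a balancing optimization. Crucially, the same parameter choice must certify the bound for the proportional and the unit objectives \emph{simultaneously}, and it is this joint requirement that pins down the parameters.

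The step I expect to be the main obstacle is precisely this joint optimization together with its case analysis: designing the completion phase so that the algorithm has no response that is good against the size measure and the count measure at once, and verifying that every reachable configuration -- including degenerate ones in which the algorithm hoards small items or empties and repacks bins -- still yields ratio at least $1.228713$. A secondary but genuine subtlety is making the adaptive bookkeeping rigorous: I must guarantee that the binary-search intervals never collapse (so a valid separating $\theta$ always exists) and that the lazy normalization is honoured wherever I claim the algorithm ``cannot recover,'' ensuring that no legal sequence of removals or rejections beats the profit the cases account for.
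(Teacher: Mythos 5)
Your proposal is a strategy outline rather than a proof: it names the framework the paper also uses (adaptive item sizing, a two-phase adversary, a case analysis over the algorithm's configurations, and a final balancing step), but every ingredient that actually produces the constant $1.228713$ is missing. The decisive idea in the paper's proof is to make \emph{all} first-phase items have sizes in $(\frac12-\eps,\frac12)$, i.e., essentially equal to $\frac12$. This single choice does three things at once: (i) it forces every non-empty bin to hold one or two items, so the algorithm's entire configuration is summarized by one number $\Gamma$, the count of two-item bins; (ii) it makes removability harmless, since any removal just exchanges nearly identical items (handled by letting a replacing item inherit the replaced item's class); and (iii) it makes the unit and proportional objectives differ by essentially a factor of $2$, so the ``joint optimization for both objectives simultaneously'' that you single out as the main obstacle simply disappears. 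Your plan, which keeps the two objectives as genuinely distinct constraints to be balanced against each other, would be much harder, and there is no reason it would land on the same constant.

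The second missing piece is the concrete adversary and the resulting arithmetic. In the paper, the classification is tied to bin occupancy (the first item placed in a bin is largish, the second smallish), and the second phase is one of exactly two \emph{fixed} continuations chosen by comparing $\Gamma$ to a threshold: either $k$ items of size $1-\beta$, giving ratio about $\frac{2k}{2k-\Gamma}$, or $\Gamma+\lfloor\frac{k-\Gamma}{2}\rfloor$ items of size $1-\theta$, which fit with smallish but not with largish items, giving ratio about $\frac{3k+\Gamma}{2k+2\Gamma}$. The constant $1.228713=\frac34+\frac{\sqrt{33}}{12}$ is exactly the crossing point of these two ratio functions of $\Gamma/k$, attained at $\Gamma=k\cdot\frac{\sqrt{33}-5}{2}$. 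Nothing in your proposal derives these two ratios or the threshold; moreover, you posit adaptively sized ``completion'' items that ``exactly fill the residual space,'' whereas the construction needs the fixed sizes $1-\beta$ and $1-\theta$ precisely so that the offline pairing works and so that the algorithm provably cannot add a second-phase item to a bin holding a single (largish) first-phase item. As written, the proposal defers all of the substantive steps --- the very ones you flag as ``the main obstacle'' --- so there is no argument yet that could be checked for correctness.
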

\begin{proof}
We focus on the proportional case, and remark that the proof also applies to the unit case because all items in the strategy have sizes that deviate from $\frac{1}{2}$ only by negligible amounts.

The input consists of two phases.  The first phase employs the adaptive sizing framework for $k$ items, with $\frac{1}{2}-\eps < \alpha < \beta < \frac{1}{2}$ for arbitrarily small $\eps>0$.  As there are $k$ bins, the algorithm is lazy, and item sizes are slightly below $\frac{1}{2}$, every item will be packed upon arrival (though possibly causing removal of another item), every non-empty bin at all times will contain either one or two items, and the number of items in any bin cannot decrease over item.  The items are classified as follows upon packing: The first item to be placed in a bin is largish, the second one to be placed in a bin is smallish, and an item that replaces another inherits the replaced item's class.

As a result, at the end of the first phase, each bin has at most two items, if is it non-empty, then it contains one largish item, and if it contains another one, then that item is smallish.  Let $\Gamma$ denote the number of bins with two items, where $\Gamma \leq \lfloor \frac k2 \rfloor$, since $k$ items were presented. If $\Gamma=0$, there are no smallish items, but the threshold $\theta$ is still well defined.

In the second phase, there are two possible continuations of the input, and the adversary's choice depends on $\Gamma$. The first one, applied when $\Gamma$ is suitably large, is to issue $k$ items of size $1-\beta > \frac{1}{2}$ each. Since all previous items have sizes at most $\beta$, the optimal offline solution packs these items in pairs, and its profit is at least $k\cdot (\frac 12-\eps)+k\cdot (1-\beta) > k \cdot (1-\eps)$.  Now consider the algorithm's packing.  As each item in the second phase has size strictly larger than $\frac{1}{2}$, any bin that was empty at the beginning of this phase, may contain at most one item at its end.
The number of bins for the algorithm with exactly one item before the arrival of the new items is at most $k-2\cdot \Gamma$, and therefore the profit of the algorithm is below $$(1-\beta)\cdot (k+\Gamma+(k-2\cdot \Gamma))<(\frac 12 + \eps)\cdot(2k-\Gamma) \ . $$ For $\eps \to 0$, the competitive ratio in this case tends to $\frac{2k}{2k-\Gamma}$. 

The other strategy for the second phase, applied when $\Gamma$ is suitably small, is to issue $\Gamma+\lfloor \frac{k-\Gamma}2 \rfloor$ items of size $1-\theta$.  Each such item has size slightly above $\frac 12$, and can be packed together with a smallish item but not with a largish item.  Clearly, no bin can have more than one such item.  Note that the number of smallish items in the input is at least $\Gamma$, and thus the number of largish items is at most $k-\Gamma$, because exactly $k$ items were issued in the first phase.  It is possible that the number of smallish items is larger than $\Gamma$ if the algorithm removed a smallish item to pack another, which is then smallish as well.
Offline, it is possible to pack $\Gamma$ bins by placing one smallish item together with one of size $1-\theta$ in each bin,
$\lfloor \frac{k-\Gamma}2 \rfloor$ bins with the remaining items of size $1-\theta$, one per bin, and packing all the
now remaining items, i.e., largish and yet unpacked smallish ones in pairs into $\lceil \frac{k-\Gamma}2 \rceil$ bins; note that if $k-\Gamma$ is odd, one of those last bins will contain only a single item.
Again assuming that $\eps\to 0$, in the limit the total size of items packed this way is
$$ \Gamma + \frac{1}{2}\left\lfloor \frac{k-\Gamma}2 \right\rfloor + \left\lfloor \frac{k-\Gamma}2 \right\rfloor + \frac{1}{2}(k - \Gamma \mod 2)  \ .$$

For the online algorithm, an item of size $1-\theta$ cannot be added to a bin with only a single item, since such item is largish by the adaptive item sizing used by the adversary. So the algorithm has only $\Gamma$ bins with total size approximately $1$, and its profit is approximately $\frac{k+\Gamma}2$ (up to negligible terms).  Routine inspection of the cases of odd and even $k-\Gamma$ yields that the competitive ratio is at least $$\frac{3k+\Gamma-(k-\Gamma \mod 2)}{2k+2\Gamma} \ . $$

If $\Gamma \geq k\cdot \frac{\sqrt{33}-5}2$, we have $\frac{2k}{2k-\Gamma} \geq 0.75+\sqrt{33}/12\approx 1.228713$, and otherwise, $\frac{3k+\Gamma}{2k+2\Gamma} \geq 0.75+\sqrt{33}/12\approx 1.228713$. Since for large $k$, the fraction $\frac{1}{2k+2\Gamma} \leq \frac {1}{2k}$ tends to zero, the lower bound follows.

In the cases $k=2,3,4,5,6,7,8,9,10$ we get: $$\frac 43, \ \frac 54, \ \frac 65, \ \frac 54, \ \frac 54, \ \frac {11}{9}, \ \frac{16}{13}, \ \frac 54, \mbox{ \ and \ } \frac{16}{13} \ , $$ by testing all values of $0 \leq \Gamma \leq \lfloor \frac k2 \rfloor$, the bounds of the first case and the suitable bound for the second case.

Again, we note that as all items have sizes almost equal to $\frac{1}{2}$, the proofs works also for unit profits (where these profits are approximately twice as large as the proportional profits).
\end{proof}


%
%
%
%
%

\subsection{Randomized online algorithms for the proportional case}
In this section we provide an alternative easier construction, which in general provides weaker bounds, though with the exception of two values of $k$, for which it improves the result of Theorem~\ref{122} (which was for deterministic online algorithms).  Unlike that theorem, this one applies to randomized algorithms as well.  Another difference is that here the input sequence is not ``accommodating'', i.e.,
the optimal offline solution does not always pack all the items. The advantages of this construction are that it is fairly simple, and that it provides a lower bound for the case of randomized online algorithms.

\begin{theorem}
Any randomized online algorithm for $k \geq 2$ bins has competitive ratio of at least $1.2$.
\end{theorem}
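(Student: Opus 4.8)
The plan is to use Yao's principle, designing a single randomized input distribution and proving that every deterministic algorithm earns in expectation at most a $1/1.2$ fraction of the optimal offline profit. Since we are now allowed to make the instance non-accommodating, the natural strategy is a two-phase construction in the proportional case with items of size slightly above $\frac{1}{2}$. First I would present $k$ items each of size $\frac12+\delta$ for a tiny $\delta>0$; since two such items overflow a bin, every non-empty bin holds exactly one of them, so the algorithm packs some number of them while the rest are rejected. Then, with some fixed probability $p$, the adversary stops (so that $\mathrm{OPT}=k(\frac12+\delta)$, roughly $k/2$), and with probability $1-p$ it continues by issuing a second batch of $k$ items of size $\frac12-\delta$. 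In the continuation, each big item can be paired offline with a small one, and the remaining small items paired among themselves, so $\mathrm{OPT}\approx k$; the online algorithm, however, is constrained by how many big items it retained in phase one.

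\textbf{The key tension to exploit} is that an algorithm keeping many big items does well if the input stops but is then ``stuck'' with bins that admit at most one extra small item, whereas an algorithm keeping few big items (to leave room for pairing) does poorly if the input stops after phase one. Concretely, I would let the deterministic algorithm retain $m$ of the big items at the end of phase one and compute its expected profit as a function of $m$: in the stopping branch its profit is about $m/2$ against $\mathrm{OPT}\approx k/2$, while in the continuation branch it can fill each of its $m$ occupied bins with one small item and use the remaining $k-m$ bins for paired small items, attaining profit at most roughly $m\cdot 1+\frac{k-m}{2}\cdot 1=\frac{k+m}{2}$ against $\mathrm{OPT}\approx k$. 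The expected ratio the adversary forces is then controlled by $p\cdot\frac{m/2}{k/2}+(1-p)\cdot\frac{(k+m)/2}{k}$, and I would choose $p$ so that this expression is independent of $m$ (a saddle point), pinning the value to $1/1.2=5/6$ and hence the ratio to $1.2$.

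\textbf{The main obstacle} is verifying that the online algorithm truly cannot do better than $\frac{k+m}{2}$ in the continuation branch, i.e., that retaining $m$ big items genuinely forecloses the pairing opportunities. Because every big item has size exactly $\frac12+\delta$, no two big items share a bin and no big item admits a second item of size $\frac12-\delta$ without overflow (their sum is $1+\ldots$ only if we are careless, so I must set sizes as $\frac12+\delta$ and $\frac12-\delta$ with $2\delta$ strictly positive to make big-plus-small overflow while small-plus-small fits); removals are permitted, but discarding a retained big item only trades it for a pair, which the analysis already accounts for. I would argue that the lazy-algorithm assumption from the start of the section lets me treat the phase-one retention count $m$ as the sole relevant state, so the per-$m$ bound is tight. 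Finally I would confirm the boundary values $m=0$ and $m=k$, handle the floor effects from odd $k-m$ as negligible additive terms (as in Theorem~\ref{122}), and note that because all sizes are within $\delta$ of $\frac12$, the identical argument with profits doubled yields the same $1.2$ bound, though the theorem as stated only claims the proportional case.
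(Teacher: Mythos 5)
There is a genuine gap: your construction creates no adversarial tension, and the two requirements you place on the item sizes contradict each other. You need big$+$small $\le 1$ for your claim that $\mathrm{OPT}\approx k$ in the continuation branch (pairing each big item with a small one), but you simultaneously need big$+$small $>1$ for your claim that an algorithm retaining many big items is ``stuck.'' If big$+$small $=1$ (your stated sizes $\frac12+\delta$ and $\frac12-\delta$), then a bin holding one big and one small item is \emph{full}, so the algorithm with $m=k$ earns $\approx k$ in the continuation and $\approx k/2$ in the stopping branch, matching OPT on both; indeed your own expression $p\cdot\frac{m}{k}+(1-p)\cdot\frac{k+m}{2k}$ has coefficient of $m$ equal to $\frac{p}{k}+\frac{1-p}{2k}>0$ for every $p\in[0,1]$, so it is strictly increasing in $m$, no equalizing ``saddle'' choice of $p$ exists, and the best deterministic response ($m=k$) achieves ratio $1$. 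If instead you make big$+$small overflow (say sizes $\frac12+2\delta$ and $\frac12-\delta$), then offline pairing is also impossible, $\mathrm{OPT}$ drops to $\approx\frac{3k}{4}$ in the continuation (each bin holds either one big, worth $\approx\frac12$, or two smalls, worth $\approx1$), and removability rescues the algorithm again: with $m=k$ it can evict bigs from half its bins as the smalls arrive and build small pairs there, also reaching $\approx\frac{3k}{4}$. Either way the ``retain everything'' algorithm is $1$-competitive against your distribution. The conceptual error is that in the removable model, the phase-one \emph{retention count} is not a commitment at all---items can be removed later at no cost beyond their own profit---so a distribution over ``stop versus continue'' with sizes straddling $\frac12$ cannot force a loss.

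The paper's construction is designed precisely to evade this. It presents $2k$ items of each of the sizes $\frac23-\eps$ and $\frac13+3\eps$ (a surplus, so all bins can be saturated), whereupon every bin ends phase one in one of two configurations: one big item, or two small items (big and small cannot coexist, as their sum exceeds $1$). The irrevocable commitment is the \emph{configuration}, not how many items are kept. The two continuations, $k$ items of size $\frac13+\eps$ or $k$ items of size $\frac23-3\eps$, each complete exactly one configuration to profit $1$ per bin while leaving the other stuck at $\approx\frac23$, and removals cannot help because discarded items are gone and the continuation items do not combine favorably with the wrong configuration. Letting $X$ be the \emph{expected} number of big-configuration bins, the oblivious adversary (who knows the algorithm's distribution and hence $X$) picks the continuation according to whether $X\le k/2$, and linearity of expectation gives expected profit at most $\frac{5k}6$ against $\mathrm{OPT}=k$, i.e., ratio $\frac65$---no Yao distribution is needed, though your Yao framing is in principle also legitimate. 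To repair your proof you would need a three-scale size structure of this kind rather than a stop/continue gamble around $\frac12$.
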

\begin{proof}
Let $\eps>0$ be a very small constant. The input starts with $2k$ items of each of the two sizes: $\frac 23-\eps$, $\frac 13+3\eps$. Since items are removable, and since there are sufficiently many items, we can assume that every bin will either have one item of size $\frac 23-\eps$ or two items of sizes $\frac 13+3\eps$.

Let $X$ be the expected number of bins with one item of size $\frac 23-\eps$.
The input continues in one of two ways. The first one is $k$ items of sizes $\frac 13+\eps$, and the second one is $k$ items of sizes $\frac 23-3\eps$.  The optimal offline solution has $k$ full bins in either case.  Consider the online algorithm.
In the first case, its best approach is to add one item to each bin with an item of size $\frac 23-\eps$. These bins are full, so there is no better packing for them.  For each bin of the other kind, even if some replacements are made, the total size of the items it holds is at most $\frac 23 + 6\eps$.  By linearity of expectation, the expected profit of the algorithm is $k-\frac{k-X}3$ (letting $\eps$ tend to zero and neglecting those terms).

In the other case, there is no reason for the algorithm to replace items of size approximately $\frac 23$.  (Besides, such replacements change only the negligible $\eps$ terms.)  Every bin without such an item can become full when the algorithm replaces one item of size $\frac 13+3\eps$ with the item of size $\frac 23-3\eps$). The expected profit of the algorithm is $k-\frac X3$.

The threshold for $X$ is $\frac k2$. The first input is used if $X \leq \frac k2$, and the second one is used otherwise, if $X > \frac k2$. The competitive ratio is at least $1.2$.

For odd values of $k$,  in the deterministic case we can use the integrality of $X$. The first input is used if $X \leq \frac {k-1}2$, and the second one is used otherwise, i.e., for $X \geq \frac {k+1}2$. The ratio is at least $$\frac{k}{k-\frac{k+1}6}=\frac{6k}{5k-1}>1.2 \ . $$
For $k=3,5,7,9$ we get lower bounds of $\frac 97$, $1.25$, $\frac{21}{17}$, and $\frac{27}{22}$, which improves slightly upon the lower bound from Theorem~\ref{122} for $k=3$ and $k=7$.
\end{proof}

\section{The online minimum peak appointment scheduling problem}
In this section we study the problem MPAS. On high level, the aim in this problem is the same as in the multiple knapsack problem, i.e., to pack the items efficiently,
which possibly explains why our results are similar in spirit and techniques.  However, the setup is rather different, and it is a minimization problem rather than maximization.
Here, every item has to be packed, and the goal is to minimize the number of the bins.  The algorithm is not required to specify the bin for packing, but it has to specify the item's position in any bin it will eventually be packed in, i.e.,
for an arriving item of size $\gamma$ (where $0<\gamma\leq 1$), the algorithm has to specify an interval of the form $[x,x+\gamma)$ such that $0 \leq x\leq 1-\gamma$.

\subsection{Warm-up: deterministic online algorithms for MPAS}
We start with a simple construction for deterministic algorithms that uses the adaptive item sizing technique.
Afterwards, we improve it into a lower bound for randomized algorithms.

\begin{theorem}
The asymptotic competitive ratio for deterministic online algorithms for MPAS is at least $1.25$.
\end{theorem}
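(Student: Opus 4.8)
The plan is to mirror the structure of the proof of Theorem~\ref{122}, replacing the choice of a bin for each item by the choice of its position, and replacing the count of two-item bins by a count of disjointly placed pairs. I would run the first phase with the adaptive item sizing framework on $n$ items whose sizes lie in $(\frac12-\eps,\frac12)$, so that the threshold $\theta$ satisfies $\frac12-\eps<\theta<\frac12$. Since each such item has length just below $\frac12$, the algorithm must place its interval either entirely in the left part, entirely in the right part, or straddling the midpoint $\frac12$. I classify an item as \emph{smallish} exactly when the algorithm places it disjointly from a previously placed, still unmatched item (thus completing a left/right pair), and as \emph{largish} otherwise, i.e.\ when it opens a fresh column or straddles the center; the binary search on sizes keeps this labelling consistent with a single size threshold $\theta$, exactly as in Theorem~\ref{122}. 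Writing $\Gamma$ for the number of pairs the algorithm forms, there are $\Gamma$ smallish items and $n-\Gamma$ largish ones, of which $\Gamma$ are paired and $n-2\Gamma$ are solo. After the first phase the algorithm can still realize the optimal peak $\lceil n/2\rceil$, but the arrangement it has committed to, captured by $\Gamma$, is what governs how well it can cope with the second phase.

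The second phase branches on $\Gamma$. When $\Gamma$ is large, I would issue a batch of items of size $1-\theta$, each slightly above $\frac12$, so that each necessarily covers the midpoint. Offline, every such item can be matched disjointly with a smallish item, since their sizes sum to below $1$, giving a packing whose peak is roughly $(n+\Gamma)/2$; but the algorithm's smallish items are already committed to columns alongside largish partners, and because a size-$(1-\theta)$ interval cannot be placed disjointly from any largish item (their sizes sum to above $(1-\theta)+\theta=1$), the committed positions force the new items to pile up and drive the online peak strictly above the offline one. When $\Gamma$ is small, a symmetric continuation exploits the many solo largish items instead, again forcing the online peak up while the offline optimum still meets the total-size bound. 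In each branch I would express the resulting ratio as a function of $\Gamma/n$, let $n\to\infty$, and then choose the threshold on $\Gamma$ so that the two branch values coincide; balancing them is what pins the asymptotic ratio at $\frac54$, and, as in Theorem~\ref{122}, a routine inspection of the parities of $n-\Gamma$ handles the rounding.

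The technical heart, and the step I expect to be the main obstacle, is the peak accounting for the online algorithm under \emph{committed} positions. Unlike the number of bins, the peak is the maximum of the coverage function and is sensitive to the exact intervals chosen, so I must show that no choice of $x$-coordinates for the second-phase items, and no earlier layout consistent with a given $\Gamma$, lets the algorithm stay near the coverage-integral bound $\lceil(\text{total size})\rceil$ that the offline solution attains. The argument has to rule out the tempting escape in which the algorithm leaves a gap around $\frac12$ during the first phase; here the adaptive sizing is essential, since it guarantees that every second-phase item conflicts with the largish items no matter where it is slid, so that no gap can absorb them. Evaluating the coverage at points just to either side of $\frac12$ together with the two side regions, and combining these with the disjointness constraints forced by the sizes, should close the estimate and yield the claimed bound of $1.25$ in the limit.
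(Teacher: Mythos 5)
There is a genuine gap, and it sits exactly at the step you flag as ``the technical heart'': with your choice of sizes (first phase near $\frac12$, second phase of size $1-\theta$ slightly above $\frac12$), the peak accounting does not merely need more work --- it provably fails, because an algorithm can meet the offline peak on your instances. Concretely, consider the algorithm that forms pairs flush with the boundary, alternating which side the first element of each pair goes to: item $1$ at $[0,s_1)$ (largish), item $2$ at $[1-s_2,1)$ (disjoint from it, hence smallish), item $3$ at $[1-s_3,1)$ (it overlaps only the already matched item $2$, hence largish), item $4$ at $[0,s_4)$ (smallish), and so on with period four. This gives $\Gamma=n/2$, so your large-$\Gamma$ branch applies, but the $n/2$ largish items are split evenly between the two sides, and the phase-one peak is the optimal $n/2$. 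You correctly note that every item of size $1-\theta$ pairwise-intersects every largish item (sizes sum to more than $1$), but pairwise intersection does not yield a common stacking point: the only points where second-phase items can be forced onto largish items are (near) $\theta$ and $1-\theta$, and the algorithm can split the $m$ new items between the positions $x=0$ and $x=\theta$. Then (as $\eps\to 0$) the coverage is at most $n/2+m/2$ just below $\theta$, at most $m+n/4$ at $\theta$, exactly $m$ at $\frac12$, and symmetrically on the right, so for $m\le\Gamma=n/2$ the online peak is $\frac{n+m}{2}$; offline, pairing each size-$(1-\theta)$ item with a smallish item and the largish items among themselves, also has peak $\lceil\frac{n+m}{2}\rceil$, and the ratio tends to $1$. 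The root cause is that $\Gamma$ does not determine the peak in MPAS (your own claim that after phase one the algorithm ``can still realize the optimal peak'' already presumes this), and that at sizes near $\frac12$ both online and offline end up with two items per column, so there is no density mismatch to exploit. Note also that even the target value is asserted rather than derived: the knapsack balance you are mirroring yields $1.2287$, not $1.25$.

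The paper's construction is genuinely different and is built to avoid exactly this failure. It uses first-phase items of size about $\frac13$, classifies an item by the \emph{position} the algorithm commits to --- smallish if its interval contains the point $\frac12$, largish otherwise, with largish further split into low and high, which are forced through the points $\frac14$ and $\frac34$ respectively --- and the adaptive sizing only serves to make this position-based classification consistent with a size threshold $\theta$. Extreme values of $Q$ (the number of smallish items) are punished with no second phase at all, via pigeonhole at one of the points $\frac14,\frac12,\frac34$ against the spread-out solution of cost $4N$; intermediate $Q$ triggers either $12N$ items of size $\frac23$ (all forced through $\frac12$, stacking on the $Q$ smallish items) or $Q'$ items of size $1-\theta\approx\frac23$, each of which covers \emph{both} $\theta$ and $1-\theta$ and hence stacks, by pigeonhole, with at least half of all largish items at a single point. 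Crucially, the offline solution then repacks the leftover size-$\frac13$ items \emph{three} per bin, and this three-per-bin offline density against the forced online stacking is what creates the $\frac54$ gap; no analogous gap exists at size $\frac12$. If you want to salvage your plan, you would have to move to the thirds regime and replace the pairing-based labels by position-based ones, at which point you have essentially reconstructed the paper's proof.
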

\begin{proof}
The input consists of one or two phases, depending on the algorithm.  In the first phase, for sufficiently large $N>0$, $12N$ items arrive with sizes chosen adaptively using $\alpha=\frac 13-2\eps$ and $\beta=\frac 13-\eps$ for an arbitrarily small value $\eps>0$.  We define the partition into smallish and largish items now. Those items whose intervals after packing contain the point $\frac 12$ are classified as smallish, and the remaining items (whose intervals do not contain the point $\frac 12$) are classified as largish.  Recall that the adaptive sizing guarantees that there exists a threshold $\theta \in (\alpha, \beta)$ such that smallish items have sizes in $[\alpha,\theta)$ and largish items have sizes in $(\theta,\beta]$.  Let $Q$ denote the number of smallish items.

We further classify the largish items as either ``low'' or ''high'', depending on whether their intervals lie completely
to the left or to the right of the point $\frac{1}{2}$, if the positions are defined on the horizontal axis.  Note that all high items contain the point $\frac 34$ in their intervals and similarly all low items contain the point $\frac 14$.  If $Q \geq 5N$ or $Q \leq 2N$, then a solution that distributes the items evenly, i.e., places $4N$ items in each of the intervals $$\left[0,\frac 13\right), \mbox{ \ \ \ \ } \left[\frac 13,\frac 23\right), \mbox{ \ \ \ \ } \left[\frac 23,1\right) \ , $$ and thus has cost of $4N$, proves that the algorithm's competitive ratio is at least $\frac54$: For $Q \geq 5N$, there are $Q \geq 5N$ smallish items, all containing the point $\frac{1}{2}$, whereas for $Q \leq 2N$, there are must be at least $5N$ low or at least $5N$ high items (by the pigeonhole principle), which then all contain the point $\frac{1}{4}$ or $\frac{3}{4}$ respectively.
Otherwise, when $Q \in (2N,5N)$, there is a second phase, which depends on $Q$'s relation to $3N$.

If $Q\geq 3N$, $12N$ items of size $\frac 23$ each are issued. An optimal offline solution places the first phase items in the interval $[0,\frac 13)$, and it places the second phase $12N$ items in the interval $[\frac 13,1)$, yielding optimal cost of $12N$. For the algorithm, all $12N$ second phase items must have the point $\frac 12$ as an internal point of their intervals, as do the $Q$ smallish items, so the cost of the algorithm is at least $15N$.  Thus the asymptotic competitive ratio is at least $1.25$ in this case.

Finally, if $Q\leq 3N$, $Q'$ items of size $1-\theta$ are issued, where $Q'$ is divisible by $3$ and $Q-2 \leq Q' \leq Q$. Note that all low items have the point $\theta$ as an internal point of their intervals, and similarly, all high items have the point $1-\theta$ as an internal point. 
We find that the second phase items have a common point with every interval of the low and high items from the first phase. By the pigeonhole principle, there are at least $\frac{12N-Q}2$ low items or at least this many high items. Thus, for at least one of the points $\theta$ and $1-\theta$, there are at least  $\frac{12N-Q}2+Q'$ items whose intervals contain it, for a cost of at least $\frac{12N+Q'}{2}-1$ for the algorithm. An optimal offline solution has $Q'$ intervals of $[0,\theta)$ for smallish items, $Q'$ intervals of $[\theta,1)$ for items of size $1-\theta$, and for the remaining $12N-Q'$ items (where this number is divisible by $3$) there are $4N-\frac{Q'}3$ intervals of every form out of $[0,\frac 13)$, $[\frac 13,\frac 23)$, $[\frac 23,1)$. Note that all smallish items have sizes not exceeding $\theta$, and the intervals assigned to such items by an optimal offline solution are sometimes slightly too long (because they have lengths of $\theta$).

The cost of an optimal offline solution is therefore at most $\frac{12N+2Q'}3$. For large value of $N$, we can neglect the additive term and find a lower bound on the ratio $$\frac{(12N+Q')/2}{(12N+2Q')/3}$$ for $Q'\leq Q< 3N$. This ratio is indeed at least $\frac 54$ for $Q' < 3N$ since $$\frac{3(12+\frac{Q'}N)}{2(12+2\frac{Q'}{N})}$$ is a monotonically decreasing function of $\frac{Q'}{N}$, and for $Q'\leq 3N$ it is minimized for $\frac{Q'}{N}=3$, in which case $$\frac{(12N+Q')/2}{(12N+2Q')/3}=1.25 \ .$$

As in each of the four cases analyzed in the proof, the competitive ratio is at least $1.25$, possibly in the limit as $N$ grows to infinity, $1.25$ is a lower bound on asymptotic competitive ratio.
\end{proof}

\subsection{A lower bound for randomized algorithms for MPAS}

Now, instead of using an adaptive classification of items, we show how to use very small items instead.
This construction, which yields a superior bound, also applies to randomized algorithms.

\begin{theorem}
The deterministic and randomized asymptotic competitive ratios for the peak problem are at least $1.2691534$.
\end{theorem}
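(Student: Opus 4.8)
The plan is to mirror the structure of the deterministic MPAS lower bound, but replace the adaptive sizing mechanism (which only constrains a deterministic algorithm) with a family of many tiny items whose placement reveals analogous information about a \emph{randomized} algorithm through expectations and linearity. In the deterministic warm-up, the adaptive classification forces each largish item to be committed to the left or right of $\frac12$ before the adversary chooses the second phase; against a randomized algorithm we cannot force such a commitment, so instead I would issue a large number of very small items (sizes $\to 0$) so that each small item's interval either contains a fixed probe point or does not, and the \emph{expected} number of small items covering each probe point becomes the relevant quantity. The adversary then selects the second phase based on these expected coverages, and Yao-type reasoning (it suffices to lower bound the ratio of expected algorithm cost to optimal cost for a single adaptive adversary, or equivalently to exhibit a distribution over inputs) converts the expectation bounds into a bound on the randomized asymptotic competitive ratio.

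Concretely, I would first fix three or more probe points in $(0,1)$ — presumably $\frac14,\frac12,\frac34$ as in the deterministic proof, possibly supplemented by points like $\theta$ and $1-\theta$ played by small-item thresholds — and issue $\Theta(N)$ tiny items. For each small item the algorithm commits to an interval, and by linearity of expectation I would define random variables counting the expected number of small items whose intervals contain each probe point; these expectations sum in a controlled way because each item's interval of near-zero length can contain at most one widely separated probe point. The key algebraic step is to show that no matter how the algorithm distributes the expected coverage among the probe points, at least one admissible second-phase continuation (e.g.\ flooding with items of size $\frac23$ that must straddle $\frac12$, or items of size $1-\theta$ that must straddle one of the side points) drives the expected peak — the maximum over probe points of total coverage — well above the optimal offline cost. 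I would then optimize the adversary's thresholds (the analogues of the cutoffs $2N,3N,5N$ in the deterministic proof) to balance the several cases so that the worst case ratio equals the claimed $1.2691534$; this balancing is presumably where the specific constant, likely the root of a small polynomial, emerges.

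The main obstacle I anticipate is handling the randomized peak correctly: the cost of the algorithm is the \emph{expected maximum} coverage over probe points, whereas linearity of expectation only controls the expected coverage at each individual point, and $\mathbb{E}[\max_i C_i] \geq \max_i \mathbb{E}[C_i]$ is the inequality I must exploit. Using $\max_i \mathbb{E}[C_i]$ as a lower bound on $\mathbb{E}[\max_i C_i]$ is valid and is exactly what lets the expectation bookkeeping go through, but I must be careful that the adversary's second-phase choice does not depend on the realized randomness — it may depend only on the expectations, which are determined by the algorithm and the first-phase input and hence known to the adversary in advance. The second delicate point is ensuring the optimal offline cost matches the deterministic analysis up to $o(N)$: the offline solution must pack the small items, the first-phase items, and the chosen second-phase items into roughly the same three evenly loaded interval-classes as before, so that the additive error from the tiny items and from rounding $Q'$ to a multiple of $3$ is negligible as $N\to\infty$. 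Once these two points are secured, the case analysis and final optimization should proceed in close parallel to the deterministic proof, yielding the improved constant because the small-item construction gives the adversary finer, continuous control over the probe-point coverages than the adaptive sizing did.
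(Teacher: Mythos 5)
Your instincts are partly right: the paper does replace adaptive sizing by a prefix of many small items, it does use Yao's principle with a distribution fixed in advance (so your worry about the adversary depending only on expectations, and the inequality $\mathbb{E}[\max_i C_i]\geq\max_i\mathbb{E}[C_i]$, is resolved in the standard way), and the offline solutions do pack the small items in layers. But there is a genuine gap at the core of your plan: finitely many probe points plus a bounded case analysis modeled on the deterministic warm-up cannot produce the constant $1.2691534$. First, the threshold $\theta$ has no analogue here --- it exists only because adaptive sizing forces a \emph{deterministic} algorithm to separate item sizes across a gap, and tiny items cannot simulate it; the continuation with items of size $1-\theta$ simply disappears from the toolbox. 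Second, and more fundamentally, the paper's constant is inherently logarithmic, not the root of a small polynomial as you guessed: the adversary uses a \emph{one-parameter family} of continuations, one instance $I_q$ for each $q\in\{t,\dots,\frac N2-1\}$, consisting of $\frac{MN}{q}$ items of size $1-\frac qN$, all of which must cover the interval $[\frac qN,1-\frac qN)$.

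The quantitative engine you are missing is the coverage profile: with prefix items of size exactly $\frac1N$, the function $f(z)$ counting prefix intervals covering $z$ satisfies $\int_0^1 f = M$, and its monotone rearrangement $g$ guarantees that every interval of length $1-\frac{2q}N$ contains a point covered by at least $g(\frac{2q}N)$ prefix items. Thus $\text{ALG}(I_q)\geq g(\frac{2q}N)+\frac{MN}q$ while $\text{OPT}(I_q)=\frac{MN}q$ (stacking the prefix in $q$ layers over $[0,\frac qN)$, which is why sizes are exactly $\frac 1N$ with divisibility conditions, not generic tiny sizes). The Yao distribution $p_q=\frac2N$, $p_{N/2}=\frac{2t}N$ is chosen precisely so that the expected ALG bound reconstructs a Riemann sum of $g$, letting the known integral $\int_0^1 g=M$ be cashed in against the harmonic sum $\sum_{q=\tau N}^{N/2-1}\frac1q\to-\ln(2\tau)$; this yields $(R-1)(\tau-\ln(2\tau))\geq\frac12-\tau$, optimized at $\tau\approx 0.212072$. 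A fixed finite set of probe points extracts only finitely many values of $g$ rather than its integral, so your construction would plateau near the deterministic bound of $1.25$ and cannot reach the claimed constant.
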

\begin{proof}
Let $N,M>0$ be large integers, such that $N$ is even, and $M$ is divisible by $N!$.

To prove the lower bound for randomized algorithms, we use Yao's approach, where one considers the best deterministic online algorithm for a known
probability distribution over inputs.

Each input starts with a fixed prefix of $N \cdot M$ items of size $\frac 1N$ each. For every point $z$ such that $0\leq z<1$, define $f(z)$ to be the number of items whose assigned intervals contain the point $z$ as an interior point or the left endpoint of the interval. Since the length of every interval is $\frac 1N$, its contribution to the definite integral $\int_{0}^1 f(z)\ \textrm{d}z$ is $\frac 1N$, and therefore
\begin{equation*}\label{eq:f-integral}
\int_{0}^1 f(z)\ \textrm{d}z = \frac{MN}{N} = M \ .
\end{equation*}
The integration is possible since the number of discontinuity points of $f$ is at most $2MN$, i.e., a constant for every fixed pair $(N,M)$. In fact, $f$ is constant between every two consecutive discontinuity points, including the boundary points $0$ and $1$ among those.
This implies that we can find the total length of intervals where $f$ has the integer value $i$ for $0\leq i \leq MN$, which we denote by $\beta_i$.  Clearly, $\sum_{i=0}^{MN} \beta_i=1$. Imagine sorting the intervals with fixed values of $f$, so that, going from right to left along the $[0,1)$ interval, we have, in this order, a sequence of intervals, the $i+1$-th of which, where $0 \leq i \leq MN$, has length $\beta_i$ and associated value $i$.  This is captured by a non-increasing step function $g$ defined as follows: For every point $z$ where $0\leq z < 1$, let $g(z)$ the unique $i$ such that $$\sum_{j=i+1}^{MN} \beta_j \leq z \ \ \mbox{ and \ \ \ } \sum_{j=i}^{MN} \beta_j > z \ ; $$ the function $g$ is well-defined, since naturally $\sum_{j=MN+1}^{MN} \beta_j=0$.  Moreover, it holds that
\begin{equation}\label{eq:g-integral}
	\int_{0}^1 g(z)\ \textrm{d}z = \sum_{i=0}^{MN} i\beta_i = \int_{0}^1 f(z)\ \textrm{d}z = M \enspace.
\end{equation}
We further note that it follows from the definition of $g$ that any (left-closed) interval of length at least $\ell$ contains a point $z$ such that $g(z) \geq f(1-\ell)$, i.e., a point $z$ which is contained in at least $f(1-\ell)$ intervals assigned by the algorithm to the items from the input prefix.

Next, the input may continue in one of many ways.  Specifically, consider an eventually fixed $t$ such that $1 \leq t \leq \frac N2-1$.  Then,
for every $q=t,t+1,\ldots,\frac N2-1$, there is an input $I_q$ which continues after the prefix with $\frac{MN}{q}$ items of length $1-\frac qN$. Since $q\leq  \frac N2-1$, we have  $$1-\frac qN \geq 1-\frac{ \frac N2-1}N=\frac 12+\frac 1N > \frac 12 \ . $$ An optimal offline solution assigns all these items the interval $[\frac qN,1]$, and it partitions the $MN$ items of size $\frac 1N$ from the prefix into $q$ subsets of $\frac{MN}{q}$, to be assigned the intervals $[\frac{j-1}N,\frac jN)$ for $j=1,2,\ldots, q$, i.e., all items from a $j$-th subset are assigned the $j$-th interval.  Clearly, the cost of such solution is $\frac{MN}{q}$, i.e.,
\begin{equation}\label{eq:opt-q}
	\text{OPT}(I_q) = \frac{MN}{q} \enspace.
\end{equation}

As for the algorithm, no matter what intervals it assigned to the items of size $1-\frac qN$, they must all contain the interval $J_q=[\frac qN, 1-\frac qN)$, whose length is $1-\frac{2q}{N}$.  Thus, by aforementioned properties of the function $g$, there is a point $z \in J_q$ such that
$f(z) \geq g\left(\frac{2q}{N}\right)$, which implies that
\begin{equation}\label{eq:alg-q}
	\text{ALG}(I_q) \geq g\left(\frac{2q}{N}\right) + \frac{MN}{q} \enspace.
\end{equation}
In addition, we consider the prefix of items by itself, i.e., with no further items released, and denote such instance $I_{N/2}$.  The optimal offline solution partitions items into $N$ subsets and uses all intervals $[\frac {j-1}N,\frac jN)$ for $q\leq j \leq N$, so
\begin{equation}\label{eq:opt-0}
	\text{OPT}(I_{N/2}) = M \enspace,
\end{equation}
while by definition and properties of the function $g$, for the online algorithm we have
\begin{equation}\label{eq:alg-0}
	\text{ALG}(I_{N/2}) \geq g\left(0\right)\enspace.
\end{equation}

Suppose that the algorithm is asymptotically $R$-competitive.  Then, for some additive constant $C$, the following inequality holds for any probability distribution $\{p_q\}_{q=t}^{N/2}$ over the instances $\{I_q\}_{q=t}^{N/2}$:
\begin{equation}\label{eq:R-comp}
	\mathbb{E}_q\left[\text{ALG}(I_q)\right] \leq R \cdot \mathbb{E}_q\left[\text{OPT}(I_q)\right] + C\enspace.
\end{equation}
Plugging in the upper bounds on OPT, i.e., \eqref{eq:opt-q} and \eqref{eq:opt-0}, as well as the lower bounds on ALG, i.e., \eqref{eq:alg-q} and \eqref{eq:alg-0}, we get
\begin{equation*}
	p_{\frac{N}{2}} \cdot g(0) +\sum_{q=t}^{\frac N2-1} p_q \left( g\left(\frac{2q}N\right) + \frac{MN}{q} \right) \leq R \left(p_{\frac{N}{2}} \cdot M +\sum_{q=t}^{\frac N2-1} p_q \frac{MN}q \right) + C \enspace,
\end{equation*}
which after moving the terms without $g$ to the right hand side becomes
\begin{equation*}
	p_{\frac{N}{2}} \cdot g(0) +\sum_{q=t}^{\frac N2-1} p_q \cdot g\left(\frac{2q}N\right) \leq R \cdot p_{\frac{N}{2}} \cdot M + \left(R-1\right) \sum_{q=t}^{\frac N2-1} p_q \frac{MN}q +C \enspace.
\end{equation*}
Letting $p_{\frac{N}{2}} = \frac{2t}{N}$ and $p_i=\frac{2}{N}$ for $t \leq i < N/2$, the left hand side becomes an upper bound on the integral of $g$ over $[0,1)$, so by~\eqref{eq:g-integral},
\begin{equation*}
	M = \int_{0}^{1} g(t)\ \textrm{d}t \leq \frac{2t}{N} \cdot g(0) +\frac{2}{N}\sum_{q=t}^{\frac N2-1} g\left(\frac{2q}N\right) \leq \frac 2N \cdot (t\cdot R\cdot M+\sum_{q=t}^{\frac N2-1} (R-1) \frac{N}q) +C \enspace.
\end{equation*}
Dividing by $2M$, the term $\frac{C}{2M}$ tends to $0$ for $M\to\infty$, so letting $\tau = \frac{t}{N}$, this inequality becomes
\begin{equation*}
	\tau\cdot R+(R-1)\sum_{q=\tau\cdot N}^{\frac N2-1}  \frac{1}q \geq \frac 12  \enspace.
\end{equation*}
With $N$ growing to infinity, the sum $\sum_{q=\tau\cdot N}^{\frac N2-1}  \frac{1}q$ tends to $ - \ln (2\tau)$.  Rearranging, we have
\begin{equation*}
	\left( R-1 \right) \left( \tau - \ln\left(2\tau\right) \right) \geq \frac{1}{2} - \tau \enspace,
\end{equation*}
where finally letting $\tau\approx 0.212072$ yields the desired lower bound on $R$.

\medskip

We note that the even though we did not specify the probability distribution over instances upfront, it is fixed, and in particular it does not depend on the deterministic online algorithm, which thus may know the distribution a priori, as stipulated by Yao's principle.
\end{proof}

\bibliographystyle{abbrv}

\bibliography{bprelaxed}

\end{document}